\begin{document}

\newcommand*{\cl}[1]{{\mathcal{#1}}}
\newcommand*{\bb}[1]{{\mathbb{#1}}}
\newcommand{\ket}[1]{|#1\rangle}
\newcommand{\bra}[1]{\langle#1|}
\newcommand{\inn}[2]{\langle#1|#2\rangle}
\newcommand{\proj}[2]{| #1 \rangle\!\langle #2 |}
\newcommand*{\tn}[1]{{\textnormal{#1}}}
\newcommand*{\1}{{\mathbb{1}}}
\newcommand{\T}{\mbox{$\textnormal{Tr}$}}
\newcommand*{\todo}[1]{\textcolor[rgb]{0.99,0.1,0.3}{#1}}

\theoremstyle{plain}
\newtheorem{prop}{Proposition}
\newtheorem{proposition}{Proposition}
\newtheorem{theorem}{Theorem}
\newtheorem{lemma}[theorem]{Lemma}
\newtheorem{remark}{Remark}

\theoremstyle{definition}
\newtheorem{definition}{Definition}

\title{Estimating Quantum Mutual Information Through a Quantum Neural Network}
\author{Myeongjin Shin}
\email{hanwoolmj@kaist.ac.kr}
\affiliation{School of Computing, Korea Advanced Institute of Science and Technology (KAIST), Daejeon 34141, Korea}

\author{Junseo Lee}
\email{js\_lee@norma.co.kr }
\affiliation{School of Electrical and Electronic Engineering, Yonsei University, Seoul 03722, Korea}
\affiliation{Quantum Security R\&D, Norma Inc., Seoul 04799, Korea}

\author{Kabgyun Jeong}
\email{kgjeong6@snu.ac.kr}
\affiliation{Research Institute of Mathematics, Seoul National University, Seoul 08826, Korea}
\affiliation{School of Computational Sciences, Korea Institute for Advanced Study, Seoul 02455, Korea}

\date{\today}

\begin{abstract}
We propose a method of quantum machine learning called quantum mutual information neural estimation (QMINE) for estimating von Neumann entropy and quantum mutual information, which are fundamental properties in quantum information theory. The QMINE proposed here basically utilizes a technique of quantum neural networks (QNNs), to minimize a loss function that determines the von Neumann entropy, and thus quantum mutual information, which is believed more powerful to process quantum datasets than conventional neural networks due to quantum superposition and entanglement. To create a precise loss function, we propose a quantum Donsker-Varadhan representation (QDVR), which is a quantum analog of the classical Donsker-Varadhan representation. By exploiting a parameter shift rule on parameterized quantum circuits, we can efficiently implement and optimize the QNN and estimate the quantum entropies using the QMINE technique. Furthermore, numerical observations support our predictions of QDVR and demonstrate the good performance of QMINE.
\end{abstract}

\maketitle

\section{\label{sec:Introduction}Introduction}
The concept of quantum mutual information (QMI) in quantum information theory quantifies the amount of information shared between two quantum systems. This extends the classical notion of mutual information to the quantum regime~\cite{J07,NC00,W17}. This information measure is fundamental, because it determines the quantum correlation or entanglement between two quantum systems. The information obtained from quantum mutual information can be applied to various fields of quantum information processing such as quantum computation, quantum cryptography, and quantum communication~\cite{NC00,W17} (particularly in quantum channel capacity problems~\cite{BS04,H20}). They also play a crucial role in quantum machine learning~\cite{BWP+17,CCC+19}, where they measure the information shared between different representations of quantum datasets. Moreover, the gathered information can be used to enhance the efficiency and effectiveness of quantum algorithms in processing quantum data.

Quantum mutual information is expressed as the sum of von Neumann entropies, denoted by $S(\rho)=-\tn{Tr}(\rho\ln\rho)$ for a quantum state $\rho$, making the determination of the von Neumann entropy~\cite{BZ17} essential for calculating quantum mutual information. In recent years, the estimation of the von Neumann entropy has garnered significant attention in the field of quantum information theory. Various methods have been proposed to estimate von Neumann entropy, including those exploiting quantum state tomography~\cite{OW16}, Monte Carlo sampling~\cite{HGKM10}, and entanglement entropy~\cite{CCD09,GHS21,AISW20,TV21,WZW22,WGL+22,GL19,SH21}. Several studies~\cite{WGL+22,GL19,SH21} have utilized quantum query models for entropy estimation and have demonstrated promising quantum speedups. Specifically, Wang \emph{et al}.~\cite{WGL+22} proposed that the von Neumann entropy can be estimated with an accuracy of $\varepsilon$ by using $O(\frac{r^2}{\varepsilon^2})$ queries. However, these query model-based algorithms have practical limitations because a quantum circuit that generates the quantum state must be prepared, and the effectiveness of constructing a quantum query model for the input state remains an open question~\cite{WZW22}. Thus, we focused on estimating the von Neumann entropy of an unknown quantum state using only identical copies of the state. To the best of our knowledge, no existing quantum algorithms estimate the von Neumann entropy using $O(\text{poly}(r), \text{poly}(\frac{1}{\varepsilon}))$ copies of the quantum state, where $r$ represents the rank of the state.

A mutual information neural estimation (MINE) method is a novel technique that utilizes neural networks to calculate the classical mutual information between two random variables. More precisely, this method optimizes a neural network to estimate mutual information by minimizing the loss function. The loss function is based on the Donsker-Varadhan representation~\cite{vN96} that provides a lower bound for the well-known Kullback-Leibler (KL) divergence.

Quantum neural networks (QNNs)~\cite{BBF+20,WDK+17}, which are among the most powerful quantum machine learning methods, serve as quantum counterparts to conventional neural networks, and offer several advantages. One notable advantage is the ability to use a quantum state as an input, which is particularly advantageous when calculating quantum mutual information or the von Neumann entropy. We identified two types of QNNs in the literature~\cite{BBF+20,BLSF19} that possess a neural network structure and leverage quantum advantages accompanied by well-defined training procedures. In this study, we employed a parameterized quantum circuit~\cite{BLSF19}, which is known for its quantum advantages, despite the presence of the barren plateau problem, which requires further investigation~\cite{MBS+18}.

As a quantum analog of MINE, we propose a quantum mutual information neural estimation (QMINE), which is a method for determining the von Neumann entropy and quantum mutual information through a quantum neural network technique. Similar to the classical case, QMINE uses a quantum neural network to minimize the loss function that evaluates the von Neumann entropy. To generate a loss function that estimates the von Neumann entropy, we present the quantum Donsker-Varadhan representation (QDVR), which is a quantum version of the Donsker-Varadhan representation. QMINE offers the potential for a quantum advantage in estimating the von Neumann entropy facilitated by QDVR. By converting the problem of von Neumann entropy estimation into a quantum machine learning regime, QMINE opens new possibilities. There is also the potential to estimate von Neumann entropy using only $O(\text{poly}(r), \text{poly}(\frac{1}{\varepsilon}))$ copies of the quantum state. However, we acknowledge that further investigation is required owing to the challenging and well-known barren plateau problem, as well as the need for efficient quantum training methods.

The remainder of this paper is organized as follows. In Sec.~\ref{sec:background}, we briefly introduce the basic notions of quantum mutual information, MINE, and parame-
trized quantum circuits. In Sec.~\ref{sec:QDV}, we generalize the Donsker-Varadhan representation to a QDVR, which is the main component of QMINE. We also propose an estimation method for von Neumann entropy using quantum neural networks in Sec. ~\ref{sec:MAIN}. This implies that it is possible to efficiently obtain the quantum mutual information, and its numerical simulations under the framework of QMINE in Sec. ~\ref{sec:ns}. Finally, a discussion and remarks are presented in Sec. ~\ref{sec:conclusion}, and open questions and possibilities are raised for future research.

\textbf{Note on concurrent work.} The independent and concurrent work~\cite{GPSW23} appeared on the arXiv a few days after our preprint was uploaded. It introduced a method for estimating von Neumann entropy reminiscent of ours, with R\'enyi entropy, measured relative (R\'enyi) entropy, and fidelity. Our work focused on estimating von Neumann entropy with low copy complexity. We reduced the domain in the variation formula but Ref.~\cite{GPSW23} did not. We believe that limiting the trace and rank in the variation formula is crucial for effective estimation.

\section{Preliminaries} \label{sec:background}
\subsection{Quantum Mutual Information and von Neumann Entropy}
Quantum mutual information, also known as von Neumann mutual information, quantifies the relationship between two quantum states. This can be calculated by using the formula (See Fig. ~\ref{Fig1}):
\begin{align}
I\left(A:B\right) = S\left(\rho^A\right) + S\left(\rho^B\right) - S\left(\rho^{AB}\right).
\end{align}
Here, $S(\rho)$ represents the von Neumann entropy \cite{BZ17} of quantum state $\rho$ in a $d$-dimensional Hilbert space, given by $S\left(\rho\right) = -\text{Tr}\left(\rho \log\rho\right)$. Therefore, estimating the von Neumann entropy enables the estimation of the quantum mutual information.

\begin{figure}
\centering
\includegraphics[width=0.8\linewidth]{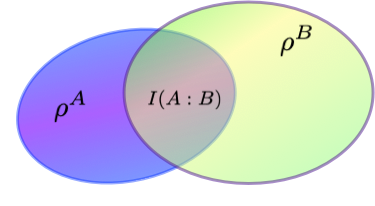}
\caption{Schematic diagram for the quantum mutual information, $I(A:B)$, between two quantum states $\rho^A$ and $\rho^B$.}
\label{Fig1}
\end{figure}

The von Neumann entropy, which is an extension of the Shannon entropy~\cite{S48} to the quantum domain, can be estimated using quantum circuits and measurements. It is defined as the entropy of the density matrix associated with a quantum state, where the density matrix is a positive semi-definite matrix that represents the state. To estimate the von Neumann entropy, measurements can be performed on multiple copies of the quantum state and the outcomes of these measurements can be utilized. The most straightforward approach is to directly estimate the density matrix and calculate the entropy using its definition. However, estimating the von Neumann entropy can be challenging, particularly for large quantum systems, owing to the difficulty in accurately estimating the density matrix. Furthermore, the estimation accuracy is influenced by the number of measurements conducted and the quality of the quantum hardware employed. However, ongoing research is focused on developing more efficient and precise methods for estimating the von Neumann entropy.

Several methods have been employed to estimate von Neumann entropy, particularly those utilizing the quantum query model \cite{WZW22,GL19,SH21}. In the quantum query model, if the quantum circuit $U$ produces a quantum state $\rho$, it utilizes unitary gates such as $U$, $U^{\dagger}$, and C$U$ (controlled-$U$). However, the quantum circuit must be known to use the query model. The effectiveness of constructing a quantum query model for a given input state remains uncertain~\cite{WZW22}, prompting us to explore the von Neumann entropy estimation without relying on the quantum query model. In the absence of a query model, our approach solely exploits identical copies of quantum states. Previous studies, such as Acharya \emph{et al}.~\cite{AISW20} employed $O(d^2)$ copies of the quantum state $\rho$, where $d$ denotes the dimension, whereas Wang \emph{et al}.~\cite{WZW22} used $O\left(\frac{1}{\varepsilon^{5}\lambda^{2}}\right)$ copies of $\rho$, where $\lambda$ represents the lower bound on all nonzero eigenvalues. To the best of our knowledge, no existing algorithm provides a high-accuracy estimation of the von Neumann entropy by using only $O(\text{poly}(r))$ copies of $\rho$ with rank $r$.

\subsection{Mutual Information Neural Estimator}
The mutual information neural estimator (MINE)~\cite{BBR+18} is a method for estimating the mutual information of two random variables by using neural networks. This approach involves selecting functions $T_{\theta}: X \times Y \rightarrow \mathbb{R}$ that are parameterized by neural networks with the parameter $\theta \in \Theta$. Considering $n$ samples, we define the empirical joint and product probability distributions as $p_{XY}^{(n)}$ and $p_X^{(n)} \times p_Y^{(n)}$, respectively. The MINE strategy is given by:

\begin{equation} \label{eq:nim}
\widehat{I(X:Y)_n} = \sup_{\theta \in \Theta} \mathbb{E}_{p_{XY}}\left[T_{\theta}\right]-\log\left(\mathbb{E}_{p_X \times p_Y}\left[e^{T_{\theta}}\right]\right),
\end{equation} 
where $\bb{E}$ is the expected value. Additionally, the Donsker-Varadhan representation is defined as follows: For any probability distribution functions $p$ and $q$,

\begin{align}
D_{KL}(p||q) = \sup_{T: \Omega \rightarrow \mathbb{R}} \mathbb{E}_p[T]-\log\left(\mathbb{E}_q\left[e^T\right]\right),
\end{align}
where we take the supremum over all the functions $T$.

Using the Donsker-Varadhan representation~\cite{DV76}, it can be proven that $I\left(X:Y\right) \geq \widehat{I\left(X:Y\right)_n}$ and MINE are strongly consistent, meaning that there exists a positive integer $N$ and a choice of neural network parameters $\theta \in \Theta$ such that for all $n \geq N$, $\left|I(X:Y) - \widehat{I(X:Y)_n}\right| \leq \varepsilon$. By applying a gradient descent method on the neural network $T_{\theta}$ to maximize $\mathbb{E}_{p_{XY}}[T_{\theta}]-\log\left(\mathbb{E}_{p_X \times p_Y}\left[e^{T_{\theta}}\right]\right)$, we can obtain $\widehat{I(X:Y)_n}$ and estimate the mutual information $I(X:Y)$.

The MINE technique has found applications in various areas of artificial intelligence, such as feature selection, representation learning, and unsupervised learning, using information-theoretic methods. Compared to previous approaches, it provides more accurate and robust estimates of mutual information, leading to significant advancements in the field of artificial intelligence (AI). It is important to recognize that MINE is a relatively new and rapidly evolving field, with ongoing research focused on enhancing and broadening its capabilities. Nonetheless, the MINE technique is widely regarded as a valuable tool in AI and information theory, offering a powerful and flexible approach for estimating the mutual information between variables.

\subsection{Parametrized Quantum Circuits}
Parameterized quantum circuits (PQCs)~\cite{BLSF19} are quantum circuits that incorporate adjustable parameters, typically represented as real numbers. These parameters can be fine-tuned to control the behavior of the quantum circuit, thereby providing increased flexibility and optimization potential. Parameterized quantum circuits have extensive applications in quantum machine learning and optimization algorithms, enabling computations that are challenging or even infeasible using classical methods. The key concept is to employ a parameterized quantum circuit as a feature extractor or waveform generator, followed by classical optimization algorithms that iteratively adjust the circuit parameters to minimize the objective function.

By manipulating circuit parameters, one can efficiently learn and represent quantum systems in a compact and adaptable manner. In quantum optimization, parameterized quantum circuits play a crucial role in global minimum search. By encoding the objective function into circuit parameters, quantum effects such as quantum parallelism and quantum tunneling can be harnessed to explore the search space more effectively than classical optimization algorithms.
The objective function can be represented as a measurement outcome of the quantum circuit. The quantum circuit can harness superposition and entanglement to explore the search space more effectively than classical optimization algorithms.

One of the core techniques used in quantum optimization procedures for parameterized quantum circuits is the parameter shift rule~\cite{MNKF18}. The parameter shift rule is a powerful tool in quantum machine learning that enables efficient computation of gradients with respect to the parameters of a quantum circuit. 

The fundamental concept behind the parameter shift rule is to employ a quantum circuit with adjustable parameters to perform the measurements. By utilizing the measurement outcome, it is possible to estimate the gradient of a cost function with respect to the circuit parameters. This rule capitalizes on the notion that small variations in the parameters of a quantum circuit can be used to calculate the derivative of the cost function pertaining to these parameters.

The underlying principle involves the preparation of two identical copies of a quantum state, each with slightly different parameter values. By comparing these two quantum states, it was possible to estimate the gradient. More importantly, this method allows the calculation of gradients in a single pass through a quantum circuit, obviating the need for additional measurements. By using multiple samples via measurement, the gradient can be estimated.

If a parameterized quantum circuit is represented as a sequence of unitary gates, it is denoted as 
\begin{equation*}
U(x;\theta):= \left(\prod_{i=1}^{N} U_i(\theta_i)\right) U_0(x).
\end{equation*}
The output of the circuit can then be observed using an observable $\hat{O}$ and the measurement outcome becomes a quantum circuit function. The quantum circuit function is expressed in simplified form as $f(x;\theta_i) = \langle\psi_i| U_i^\dagger(\theta_i) \hat{O}_{i+1} U_i(\theta_i) |\psi_i\rangle$ for each $i$. The gradient of the quantum circuit function can then be calculated using the parameter shift rule, as follows:

\begin{widetext}
\begin{equation} \label{eq:psr}
\nabla_{\theta_i} f(x;\theta_i) = c \left(\langle\psi_i| U_i^\dagger(\theta_i+s) \hat{O}_{i+1} U_i(\theta_i+s) |\psi_i\rangle - \langle\psi_i| U_i^\dagger(\theta_i-s) \hat{O}_{i+1} U_i(\theta_i-s) |\psi_i\rangle\right).
\end{equation}
\end{widetext}

The parameter shift rule has been successfully employed in various quantum machine learning algorithms, including quantum neural networks~\cite{BBF+20,BLSF19} and quantum support vector machines~\cite{RML14,HCT+19}, for optimization and training purposes. It is regarded as a valuable tool for developing efficient quantum machine learning algorithms, as it enables the efficient computation of gradients in quantum systems, which is often a challenging task. It is important to note that the parameter shift rule is an approximation, and its accuracy depends on factors such as the choice of parameters, cost function, and the specific quantum circuit. Nevertheless, it has proven to be a useful and efficient technique in the emerging field of quantum machine learning, and our ongoing research focuses on enhancing and expanding its potential capabilities.

\section{Quantum Donsker-Varadhan Representation} \label{sec:QDV}
The quantum Donsker-Varadhan representation is a mathematical framework that enables quantum neural networks to estimate the von Neumann entropy. It is a quantum counterpart of the original Donsker-Varadhan representation, with the distinction that QDVR focuses solely on the quantum entropy rather than on the relative entropy. QDVR can be considered as a modified version of the Gibbs variational principle~\cite{H68}, which restricts the domain to density matrices.

As mentioned previously, MINE~\cite{BBR+18} exploits the original Donsker-Varadhan representation to estimate classical mutual information using a (classical) neural network. In the context of estimating mutual quantum information, it is natural to consider a quantum version of the Donsker-Varadhan representation. Notably, we need only estimate the components of von Neumann entropy $S(\rho_A)$, $S(\rho_B)$, and $S(\rho_{AB})$ to determine the quantum mutual information $I(A:B)$. A variational formula for von Neumann entropy exists as follows:

\begin{theorem}[Gibbs Variational Principle~\cite{H68}] \label{gibbs}
Let $f: H^{d \times d} \rightarrow \mathbb{R}$ be a function defined on $d$-dimensional Hermitian matrices $T$ and $\rho$ be a density matrix. Then we have
\begin{equation} \label{eq:qdv}
f(T) = -\T(\rho T) + \log\left(\T(e^T)\right).
\end{equation}
Thus, for $d$-dimensional Hermitian matrices $T$, the von Neumann entropy is given by:
\begin{equation} \label{identity}
S(\rho) = \inf_{T} f(T),
\end{equation}
where the infimum is taken over all Hermitian $T$.
\end{theorem}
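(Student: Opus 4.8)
The plan is to prove the variational identity $S(\rho)=\inf_T f(T)$ by establishing two matching inequalities: a lower bound $f(T)\ge S(\rho)$ that holds for every Hermitian $T$, and an upper bound showing the value $S(\rho)$ is approached (and, when $\rho$ has full rank, attained). Everything rests on the nonnegativity of quantum relative entropy (Klein's inequality), $D(\rho\|\sigma):=\T(\rho\ln\rho)-\T(\rho\ln\sigma)\ge 0$ for density matrices $\rho,\sigma$, with equality iff $\rho=\sigma$.

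For the lower bound, the first step is to rewrite $f(T)$ through a Gibbs state. Given Hermitian $T$, set $\sigma_T:=e^{T}/\T(e^{T})$, which is a legitimate density matrix since $e^{T}\succ 0$ and $\T(e^{T})<\infty$ in finite dimension. A one-line computation using $\T\rho=1$ gives $-\T(\rho\ln\sigma_T)=-\T(\rho T)+\ln\T(e^{T})=f(T)$, hence $D(\rho\|\sigma_T)=-S(\rho)+f(T)$, i.e.\ $f(T)=S(\rho)+D(\rho\|\sigma_T)\ge S(\rho)$. This disposes of the lower bound uniformly in $T$ and, as a bonus, identifies the optimality gap $f(T)-S(\rho)$ as precisely the relative entropy between $\rho$ and the Gibbs state associated with $T$.

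For the matching bound I would pick $T$ so that $\sigma_T=\rho$, namely $T=\ln\rho$. If $\rho$ is full rank this is a genuine Hermitian matrix with $\T(e^{\ln\rho})=\T\rho=1$, so $f(\ln\rho)=-\T(\rho\ln\rho)+\ln 1=S(\rho)$ and the infimum is in fact a minimum. If $\rho$ is rank-deficient, $\ln\rho$ is not finite, so I would instead regularize with $\rho_\varepsilon:=(1-\varepsilon)\rho+\tfrac{\varepsilon}{d}\1$ and $T_\varepsilon:=\ln\rho_\varepsilon$: by the same computation $f(T_\varepsilon)=S(\rho_\varepsilon)$, while $S(\rho_\varepsilon)\to S(\rho)$ as $\varepsilon\to 0^{+}$ by continuity of the von Neumann entropy on the compact set of density matrices, giving $\inf_T f(T)\le S(\rho)$. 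Combined with the lower bound, this yields the claimed identity, with the infimum not attained in general when $\rho$ is singular.

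I expect the only genuine subtlety to be the rank-deficient case: one must be explicit that $\sigma_{T_\varepsilon}=\rho_\varepsilon$ so that $f(T_\varepsilon)=S(\rho_\varepsilon)$ exactly, and invoke continuity of $S(\cdot)$ at $\rho$. It is also worth noting that the ``function $f$'' in the statement is simply the expression defined in Eq.~(\ref{eq:qdv}); the mathematical content lies entirely in the variational identity~(\ref{identity}) and in the fact that its optimizer, when it exists, is $T=\ln\rho$ --- the feature that QMINE later exploits to read off the spectrum of $\rho$.
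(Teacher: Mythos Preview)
Your argument is correct and is the standard proof of the Gibbs variational principle: rewrite $f(T)=S(\rho)+D(\rho\|\sigma_T)$ with $\sigma_T=e^T/\T(e^T)$, invoke Klein's inequality for the lower bound, and take $T=\ln\rho$ (or a full-rank regularization) for the matching upper bound. There is nothing to compare against here, however: the paper does not prove Theorem~\ref{gibbs} at all but simply quotes it from the literature (the citation to~\cite{H68}), and then builds Lemma~\ref{lem2}, Proposition~1, Lemma~\ref{lem3}, and Proposition~\ref{prop:qdvr} on top of it. Your write-up would serve perfectly well as the omitted proof; the only cosmetic point is that the paper's later Lemma~\ref{lem3} handles the rank-deficient case by an explicit truncated choice of $T_0$ rather than by the $\rho_\varepsilon=(1-\varepsilon)\rho+\tfrac{\varepsilon}{d}\1$ smoothing you use, but that distinction is irrelevant to Theorem~\ref{gibbs} itself.
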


Our objective is to determine the Hermitian matrix $T$ that maximizes $f(T)$. We parameterize $T$ by using $t_i \in \mathbb{R}$ and $\ket{\psi_i} \in \mathbb{C}^d$. We can express $T = \sum_{i=1}^r t_i \ket{\psi_i}\bra{\psi_i}$, which gives us $f(T) = -\sum_{i=1}^d t_i \bra{\psi_i}\rho\ket{\psi_i} + \log\left(\sum_{i=1}^d e^{t_i}\right)$. To compute $f(T)$, we must measure the quantum state $\rho$ using the basis $\{\ket{\psi_i}\}_{i=1}^d$. Achieving this with an error smaller than $\varepsilon$ requires $O(\frac{\sigma^2}{\varepsilon^{2}})$ samples of $\rho$, where $\sigma:=\text{Var}(\{t_i\})$. However, the number of required samples of $\rho$ can become substantial because of the broad domain of $T$, which encompasses all Hermitian matrices. Therefore, reducing the size of this domain is imperative.

\begin{lemma} \label{lem2}
For all Hermitian matrices $T$, the function $f$ holds that
\begin{equation} \label{identity}
f(T) = f(T+cI) 
\end{equation}
for a constant $c$.
\end{lemma}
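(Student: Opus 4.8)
The plan is to expand both sides of the claimed identity using linearity of the trace together with the normalization $\T(\rho)=1$, which holds because $\rho$ is a density matrix. First I would treat the affine term: $\T\bigl(\rho(T+cI)\bigr) = \T(\rho T) + c\,\T(\rho) = \T(\rho T) + c$, so the first summand of $f(T+cI)$ equals $-\T(\rho T) - c$. The task then reduces to showing that the logarithmic term absorbs the matching $+c$.

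For the logarithmic term I would use that $cI$ commutes with $T$ (indeed with every matrix), so the matrix exponential factorizes: $e^{T+cI} = e^{cI}e^{T} = e^{c}\,e^{T}$. Equivalently, writing the spectral decomposition $T = \sum_{i} \lambda_i \ket{\psi_i}\bra{\psi_i}$ one has $T+cI = \sum_i (\lambda_i + c)\ket{\psi_i}\bra{\psi_i}$ and hence $e^{T+cI} = \sum_i e^{\lambda_i+c}\ket{\psi_i}\bra{\psi_i} = e^{c}e^{T}$. Taking traces gives $\T(e^{T+cI}) = e^{c}\,\T(e^{T})$, and therefore $\log\bigl(\T(e^{T+cI})\bigr) = c + \log\bigl(\T(e^{T})\bigr)$.

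Putting the two pieces together, $f(T+cI) = -\T(\rho T) - c + c + \log\bigl(\T(e^{T})\bigr) = f(T)$, which is the assertion. I do not anticipate a real obstacle: the only step deserving a word of justification is the factorization of the matrix exponential, which is valid precisely because $cI$ is central (or, if one prefers, follows termwise from the power series). It is worth noting that this shift invariance is exactly the property that legitimizes fixing the scale of the candidate $T$---for instance normalizing $\T(e^{T})=1$, i.e. restricting to $T$ for which $e^{T}$ is a density matrix---without altering the value of the infimum in Theorem~\ref{gibbs}, which is the reduction of domain the paper is building toward.
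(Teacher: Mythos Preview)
Your proof is correct and follows essentially the same route as the paper's: expand the trace linearly using $\T(\rho)=1$, factor $e^{T+cI}=e^{c}e^{T}$, and cancel the resulting $\pm c$. The only difference is that you spell out the justification for the exponential factorization via commutativity or spectral decomposition, which the paper leaves implicit.
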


\begin{proof}
For any $T \in H^{d \times d}$, we have
\begin{align*}
f(T+cI) &= -\text{Tr}(\rho(T+cI)) + \log\left(\text{Tr}(e^{T+cI})\right) \\
&= -\text{Tr}(\rho T) - c\text{Tr}(\rho) + \log\left(e^c\text{Tr}(e^T)\right) \\
&= -\text{Tr}(\rho T) - \log\left(\text{Tr}(e^T)\right) \\
&= f(T).
\end{align*}
Thus, $f(T) = f(T+cI)$ for a constant $c$.
\end{proof}

\begin{proposition}[Domain Reduction] \label{}
Let $f: H^{d \times d} \rightarrow \mathbb{R}$ be a function defined on $d$-dimensional Hermitian matrices and let $\rho$ be a density matrix. Then,
\begin{equation} \label{identity}
S(\rho) = \inf_Tf(T)
\end{equation}
for $d$-dimensional `positive' Hermitian matrices $T$.
\end{proposition}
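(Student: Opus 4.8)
The plan is to obtain this immediately from Lemma~\ref{lem2} together with the Gibbs variational principle (Theorem~\ref{gibbs}). One inequality is free from the inclusion of domains: every positive Hermitian matrix is in particular Hermitian, so
\begin{equation*}
\inf_{T \text{ positive}} f(T) \;\geq\; \inf_{T = T^\dagger} f(T) \;=\; S(\rho).
\end{equation*}

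For the reverse inequality, I would take an arbitrary Hermitian $T$ and let $\mu$ denote its smallest eigenvalue. Choosing the shift constant $c := 1 - \mu$, the matrix $T + cI$ has every eigenvalue equal to $\lambda_i(T) - \mu + 1 \geq 1 > 0$, hence is positive definite. Lemma~\ref{lem2} then gives $f(T + cI) = f(T)$, so that
\begin{equation*}
\inf_{T' \text{ positive}} f(T') \;\leq\; f(T + cI) \;=\; f(T).
\end{equation*}
Since $T$ was an arbitrary Hermitian matrix, taking the infimum of the right-hand side over all Hermitian $T$ and invoking Theorem~\ref{gibbs} yields $\inf_{T' \text{ positive}} f(T') \leq S(\rho)$. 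Combining the two bounds gives the proposition.

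This is essentially a one-line corollary of the shift-invariance of $f$, so I do not anticipate any real obstacle; the only mild point of care is to choose the shift so that the translate lands strictly inside the cone of positive matrices (if one is content with positive semidefinite $T$, then $c = -\mu$ already suffices, and one may note that the optimizer of the unrestricted problem can likewise be translated into this cone). I would also emphasize that this is only the first of the two domain restrictions flagged in the introduction: the subsequent, and more substantive, step is to additionally cap the rank of $T$ by $r = \text{rank}(\rho)$, which is what actually drives the reduction in copy complexity; that step, unlike this one, cannot be read off from shift-invariance alone and will require a separate argument.
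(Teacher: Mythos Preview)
Your argument is correct and is essentially identical to the paper's: the paper also shifts an arbitrary Hermitian $T$ by $cI$ with $c=\max_{|\psi\rangle}(-\langle\psi|T|\psi\rangle)=-\lambda_{\min}(T)$ and invokes Lemma~\ref{lem2} to land in the positive cone with the same $f$-value. Your write-up is slightly more explicit about the ``easy'' inequality from domain inclusion and uses $c=1-\mu$ to get strict positivity, but these are cosmetic differences.
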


\begin{proof}
For any Hermitian matrix $T \in H^{d \times d}$, let $c = \max_{\ket{\psi_i}\in \mathbb{C}^d}(-\bra{\psi_i}T\ket{\psi_i})$. From Lemma~\ref{lem2}, there exists a \emph{positive} Hermitian matrix $T_0 = T+cI$ such that $f(T)=f(T_0)$. Therefore, we can reduce the domain of $T$ to a positive Hermitian matrix.
\end{proof}

Now, we only need to search for the space of the positive Hermitian matrices to find the optimal $T$. The computational complexity of copying $\rho$ to calculate $f(T)$ depends on $T$. To reduce this complexity, we need to specify and limit the trace of $T$.

\begin{lemma} \label{lem3}
A positive Hermitian matrix $T_0$ with rank $r$ exists that satisfies $\T(T_0) \leq 2rn + r\log\left(\frac{1}{\varepsilon}\right)$ such that
\begin{equation} \label{trace reduction}
\left|S(\rho) - f(T_0)\right| < \varepsilon,
\end{equation}
where $\rho$ is an $r$-rank density matrix.
\end{lemma}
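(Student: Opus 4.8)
The plan is to obtain $T_0$ by perturbing the (formal) minimizer of $f$ from Theorem~\ref{gibbs}. Write the spectral decomposition $\rho=\sum_{i=1}^{r}\lambda_i\ket{\psi_i}\bra{\psi_i}$ with $\lambda_1\ge\cdots\ge\lambda_r>0$; then $S(\rho)=\inf_Tf(T)$ is attained (formally) at $T=\log\rho$. That matrix violates our requirements on two counts: every $\lambda_i\le 1$, so $\log\rho$ is not positive, and it has large negative trace whenever some $\lambda_i$ is tiny. The first defect is free to repair by Lemma~\ref{lem2}, since adding a multiple of $I$ to $T$ leaves $f$ unchanged; the second I would repair by simply \emph{dropping} the directions whose eigenvalues fall below a threshold, which is cheap because $-x\log x\to 0$ as $x\to 0^{+}$, so the entropy those eigenvalues carry is negligible.

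Concretely, I would fix a threshold $\tau>0$ and a shift $c\ge\log(1/\tau)$, put $B=\{i:\lambda_i\ge\tau\}$, and set
\[
T_0=\sum_{i\in B}\bigl(c+\log\lambda_i\bigr)\,\ket{\psi_i}\bra{\psi_i}.
\]
The bound $c\ge\log(1/\tau)$ forces every coefficient $c+\log\lambda_i\ge 0$, so $T_0$ is positive; its rank is $|B|\le r$, and it can be bumped up to exactly $r$ by adding $\eta\ket{\psi_i}\bra{\psi_i}$ on the remaining support directions of $\rho$ and letting $\eta\to 0^{+}$, which perturbs every quantity below by a vanishing amount; and since each $\log\lambda_i\le 0$, the trace obeys $\T(T_0)\le c\,|B|\le cr$. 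Because $T_0$ commutes with $\rho$, evaluating $f(T_0)=-\T(\rho T_0)+\log\T(e^{T_0})$ in the shared eigenbasis is routine: with $P=\sum_{i\in B}\lambda_i$ and $k=|B|$ one obtains $f(T_0)-S(\rho)=c(1-P)+\log P+\log\!\bigl(1+(d-k)e^{-c}/P\bigr)-S_{\mathrm{small}}$, where $S_{\mathrm{small}}=-\sum_{i\notin B}\lambda_i\log\lambda_i\ge 0$ is the entropy of the discarded eigenvalues.

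To finish I would invoke $f(T_0)\ge\inf_Tf(T)=S(\rho)$ (Theorem~\ref{gibbs}), so that only an upper bound on the displayed difference is needed. Dropping the nonpositive terms $\log P$ and $-S_{\mathrm{small}}$, using $\log(1+x)\le x$, and the crude estimates $1-P\le r\tau$, $d-k\le d$, and $P\ge\tfrac12$ (valid once $r\tau\le\tfrac12$), this yields $0\le f(T_0)-S(\rho)\le cr\tau+2de^{-c}$. Choosing $c=2\log d+\log(1/\varepsilon)$ and $\tau=e^{-c}=\varepsilon/d^{2}$ makes the positivity constraint tight, keeps $r\tau\le\varepsilon/d\le\tfrac12$, and bounds the error by $(2\log d+\log(1/\varepsilon)+2)\,\varepsilon/d$, which is $<\varepsilon$ once $2\log d+\log(1/\varepsilon)+2\le d$ (otherwise the constants in $c$ can simply be inflated). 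Meanwhile $\T(T_0)\le cr=2r\log d+r\log(1/\varepsilon)\le 2rn+r\log(1/\varepsilon)$, using $\log d\le n$ for $d=2^{n}$.

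I expect the main obstacle to be precisely the three-way tension used above. Positivity forces the uniform shift $c$ to exceed $\log(1/\tau)$, but the trace is $\approx rc$, so $c$ --- and hence $\tau$ --- cannot be large; yet a large $\tau$ throws away too much entropy through $S_{\mathrm{small}}$, while a small $c$ leaves a ``normalization defect'' $\log(1+(d-k)e^{-c}/P)$ coming from the $d-k$ unit eigenvalues of $e^{T_0}$ off the support of $T_0$. Driving both error sources below $\varepsilon$ while staying inside the budget $2rn+r\log(1/\varepsilon)$ is the delicate step, and it is what pins the scales to $c\sim 2\log d+\log(1/\varepsilon)$ and $\tau\sim e^{-c}$. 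A secondary point to watch is keeping $P$ bounded away from $0$ --- which $r\tau\le\tfrac12$ buys --- so that the normalization defect stays $O(de^{-c})$ rather than diverging.
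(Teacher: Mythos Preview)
Your construction is essentially identical to the paper's: both take $T_0$ supported on the eigenvectors of $\rho$ with eigenvalue at least the threshold $\tau=k=\varepsilon/d^{2}$, assign eigenvalue $\log(\lambda_i/k)=c+\log\lambda_i$ there with $c=\log(1/k)=2\log d+\log(1/\varepsilon)$, and then bound $|S(\rho)-f(T_0)|$ by terms of order $dk\log(1/k)$ and $dk$. Your write-up is in fact a bit more careful than the paper's --- you correctly track the $(d-|B|)$ unit eigenvalues of $e^{T_0}$ off the support and you address bumping the rank up to exactly $r$ --- but the idea, the parameter choices, and the resulting trace bound $\T(T_0)\le 2r\log d+r\log(1/\varepsilon)\le 2rn+r\log(1/\varepsilon)$ coincide.
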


\begin{proof}
See the details of the proof in Appendix~\ref{app:a}.
\end{proof}

\begin{proposition}[Quantum Donsker-Varadhan Representation] \label{prop:qdvr}
Let $f: H^{d \times d} \rightarrow \mathbb{R}$ be a function defined on $d$-dimensional Hermitian matrices, and let $\rho$ be an $r$-rank density matrix.
\begin{equation} \label{eq:qdv}
g(T) =-\T\left(c \rho T\right) + \log\left(\T(e^{cT})\right),
\end{equation}
where $c \geq 2rn + r\log\left(\frac{1}{\varepsilon}\right)$. Then,
\begin{equation} \label{qdv}
\left|S(\rho) - \inf(g(T))\right| < \varepsilon
\end{equation}
for any $d$-dimensional $r$-rank density matrix $T$.
\end{proposition}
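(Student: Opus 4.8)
The plan is to read the statement as a trapping argument, exploiting that $g$ is nothing but $f$ precomposed with the rescaling $T\mapsto cT$. Since a density matrix $T$ is Hermitian, so is $cT$, and
\[
g(T)=-\T\bigl(\rho(cT)\bigr)+\log\bigl(\T(e^{cT})\bigr)=f(cT).
\]
Hence minimizing $g$ over $r$-rank density matrices $T$ is the same as minimizing $f$ over positive Hermitian matrices of rank $r$ and trace exactly $c$. I would then sandwich $\inf_T g(T)$ between $S(\rho)$ and $S(\rho)+\varepsilon$.

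For the lower bound I would simply invoke the Gibbs variational principle (Theorem~\ref{gibbs}): $f(H)\ge S(\rho)$ for every Hermitian $H$, so $g(T)=f(cT)\ge S(\rho)$ for every density matrix $T$, and therefore $\inf_T g(T)\ge S(\rho)$.

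For the upper bound I would feed Lemma~\ref{lem3} a trial matrix. That lemma supplies a positive Hermitian $T_0$ of rank $r$ with $\T(T_0)\le 2rn+r\log(1/\varepsilon)\le c$ and $|S(\rho)-f(T_0)|<\varepsilon$. The naive candidate $T:=T_0/c$ has $cT=T_0$, hence $g(T)=f(T_0)<S(\rho)+\varepsilon$, so $\inf_T g(T)<S(\rho)+\varepsilon$; combined with the lower bound this yields $0\le\inf_T g(T)-S(\rho)<\varepsilon$, which is the assertion.

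The one subtlety — and the part I expect to be the real work beyond Lemma~\ref{lem3} — is that $T_0/c$ has trace $\T(T_0)/c$, which is only $\le 1$, so it is not literally a density matrix. I would repair this by enlarging $T_0$ inside its own support: set $T_0':=T_0+\delta\,\Pi_0$, where $\Pi_0$ is the spectral projector of $T_0$ onto its nonzero eigenvalues and $\delta:=(c-\T(T_0))/r\ge 0$, so that $T_0'$ is still positive of rank $r$ with $\T(T_0')=c$. Because the construction behind Lemma~\ref{lem3} yields $T_0$ supported on $\operatorname{supp}(\rho)$ (this is forced anyway, since otherwise $f(T_0)$ could not be $\varepsilon$-close to $S(\rho)$), one has $\T(\rho\Pi_0)=1$, and since $\Pi_0$ commutes with $T_0$ a one-line computation gives $f(T_0')=-\T(\rho T_0)+\log\bigl(\T(e^{T_0})-(d-r)(1-e^{-\delta})\bigr)\le f(T_0)$ — adding $\delta\Pi_0$ merely enlarges the overall additive shift of $T_0$, which only shrinks the $(d-r)$-dimensional complement's contribution. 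Hence $|S(\rho)-f(T_0')|<\varepsilon$, and now $T:=T_0'/c$ is a genuine $r$-rank density matrix with $g(T)=f(cT)=f(T_0')<S(\rho)+\varepsilon$. So the variational inequality itself is free from Gibbs; the actual obstacle is this trace-normalization/rescaling step, which is precisely why the support-matching property built into Lemma~\ref{lem3} is needed.
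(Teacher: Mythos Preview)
Your argument is correct and follows the same skeleton as the paper's proof: lower bound from the Gibbs variational principle, upper bound by feeding in the witness $T_0$ from Lemma~\ref{lem3} and rescaling by $c$. You are in fact more careful than the paper on the trace-normalization step: the paper simply invokes ``Lemma~\ref{lem2} and Lemma~\ref{lem3}'' to assert the existence of a rank-$r$ positive $T_0$ with $\T(T_0)=c$ exactly, whereas Lemma~\ref{lem2} only shifts by multiples of $I$ and would destroy the rank-$r$ constraint; your $\delta\,\Pi_0$ adjustment (with $\Pi_0$ the support projector, which by the Appendix construction coincides with $\operatorname{supp}(\rho)$ so that $\T(\rho\Pi_0)=1$) is what actually makes this step rigorous.
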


\begin{proof}
By using Lemma~\ref{lem2} and Lemma~\ref{lem3}, there exists an $r$-rank positive Hermitian matrix $T_0$ with $\text{Tr}(T_0) = c$ such that $|S(\rho) - f(T_0)| < \varepsilon$. Thus, $T_1 = \frac{T_0}{c}$ is an $r$-rank density matrix, and $|S(\rho) - g(T_1)| < \varepsilon$. From Theorem~\ref{gibbs}, $S(\rho) \leq g(T)$ for all density matrices $T$. Therefore, $\left|S(\rho) - \inf(g(T))\right| \leq \left|S(\rho) - g(T_1)\right| < \varepsilon$. This completes the proof.
\end{proof}

According to the quantum Donsker-Varadhan representation in Proposition~\ref{prop:qdvr}, we only need to search within the space of the density matrices. By calculating $g(T)$ with an error of $\varepsilon$, the complexity of copying $\rho$ is $O(\frac{c^2}{\varepsilon^2})$. Next, we plan to determine the optimal density matrix $T$ that minimizes $g(T)$. In the next section, we will use quantum neural networks to determine the optimal $T$.
\bigskip

\section{Von Neumann Entropy Estimation with Quantum Neural Networks} \label{sec:MAIN}
We now explain the estimation of von Neumann entropy using quantum neural networks, specifically focusing on parameterized quantum circuits as an example. Our approach is inspired by the work of Liu \emph{et al}.~\cite{LMZW21}, who utilized variational autoregressive networks and quantum circuits to address problems in quantum statistical mechanics. To achieve this, specific values are assigned to the variables in $T$ by defining $t$ as a set of real numbers, $\{t_i | t_i \in \mathbb{R}\}$ and $\ket{\psi_i}$ as complex vectors in $\mathbb{C}^d$. Additionally, let us assume that the rank of $\rho$ is denoted by $r$, and we define $T = \sum^r_{i=1} t_i \ket{\psi_i}\bra{\psi_i}$.

Consequently, the function $g(T)$ becomes $g(T) = -c\sum^r_{i=1} t_i \bra{\psi_i}\rho\ket{\psi_i} + \log\left(d-r+\sum^r_{i=1} e^{ct_i}\right)$. We can introduce a unitary operator $U$ that transforms $\ket{\psi_i}$ into $\ket{i}$, and represent this unitary operator using a set of parameters $\theta$ as $U(\theta)$ as follows:

\begin{equation}
g(T) = -c\sum^r_{i=1} t_i \bra{i}U(\theta)\rho U^{\dagger}(\theta)\ket{i} + \log\left(d-r+\sum^r_{i=1} e^{ct_i}\right).
\end{equation}

By considering $U(\theta)$ as a quantum neural network and $\rho$ as its input, we can obtain the network output by computing $U(\theta)\rho U^{\dagger}(\theta)$. To accurately calculate $g(T)$ with an error rate less than $\varepsilon$, it is necessary to measure the output of the quantum neural network $O\left(\frac{\text{Var}({ct_i})^2}{\varepsilon^{2}}\right) = O\left(\frac{c^2}{\varepsilon^{2}}\right)$ times.

Our objective was to optimize the parameters to determine the infimum of $g(T)$. For example, let us consider a parameterized quantum circuit~\cite{BLSF19} with Pauli gates as a quantum neural network $U(\theta) = \prod^k_{i=1} U(\theta_i)$, where $U(\theta_i) = e^{-i\frac{\theta_i}{2}P_i}$. By applying the parameter shift rule~\cite{MNKF18}, we observe that

\begin{equation}
\nabla_{\theta}g(t, \theta) = \frac{1}{2}\left[g\left(t, \theta+\frac{\pi}{2}\right)-g\left(t, \theta-\frac{\pi}{2}\right)\right],
\end{equation}
and

\begin{equation}
\frac{\partial g(t, \theta)}{\partial t_i} = -c\bra{i}U(\theta)\rho U^{\dagger}(\theta)\ket{i} + \frac{ce^{ct_i}}{d-r+\sum^r_{i=1} e^{ct_i}}.
\end{equation}

To satisfy the conditions $t_i \geq 0$ and $\sum^r_{i=1} t_i = 1$, we choose $t_i = \left(\prod^{i-1}_{j=1} \sin^2\varphi_j\right)\left(\cos^2\varphi_i\right)$. We can apply gradient descent to $\varphi_j$ and $\theta_i$ to optimize the quantum circuit. To calculate the gradient, we require $O\left(\frac{c^2}{\varepsilon^{2}} \times \left(\text{\# of parameters in QNN}\right)\right)$ copies of $\rho$. Therefore, to obtain $\inf\left(g\left(T\right)\right)$ and estimate $S\left(\rho\right)$ with an error of less than $\varepsilon$, we require

\begin{widetext}
\begin{align}
O&\left(\frac{c^2}{\varepsilon^{2}} \times \left(\text{\# of parameters in QNN}\right) \times \left(\text{\# of trainings in QNN}\right)\right) \nonumber\\
&= O\left(\frac{r^2}{\varepsilon^2}\left(n^2+\log^2\left(\frac{1}{\varepsilon}\right)\right)n_\text{params}n_\text{train}\right)
\end{align}
\end{widetext}
copies of $\rho$.

Analytic gradient measurements in convex loss functions require $O(\frac{n^3}{\varepsilon^2})$ copies of $\rho$ to converge to a solution with $O(\varepsilon)$ close to the optimum~\cite{HN21}. In general, situations that involve parameterized quantum circuits may have nonconvex loss functions, but many algorithms still utilize parameterized quantum circuits and achieve quantum speedups. We anticipate that quantum speedup can be achieved by employing parameterized quantum circuits with analytic gradient measurements in QMINE and estimating the von Neumann entropy using $O(\text{poly}(r))$ copies of $\rho$. In future research, we will investigate the relationships between $n_\text{train}$ and $n_\text{params}$, and the performance of this approach. The key point is to transform the quantum mutual information estimation problem into a quantum neural network problem.

\section{Numerical Simulations} \label{sec:ns}
We demonstrated the performance of QMINE in estimating the quantum mutual information of random density matrices through numerical simulations of a quantum circuit. Our goal is to show that QMINE can estimate quantum mutual information with low error. We also analyze the rank and trainable parameters, and conducted simulations to support the results on QDVR.

\subsection{Rank Analysis} \label{sec:ra}
Based on QDVR, we establish that if the rank of the density matrix $\rho$ is $r$, then setting the rank of the parameter matrix $T$ to $r$ is sufficient. Thus, we aim to determine the optimal $T$ that estimates the von Neumann entropy. To investigate the effect of rank, we experimented with the rank of $T$ by letting $r=\text{rank}(\rho)$ and $k=\text{rank}(T)$. In this analysis, we simulate the scenario with $N=5, D=30, r=8$, and $c \leq 80$, where $N$ is the number of qubits, $D$ is the circuit depth, $r$ is the rank of the density matrix, $c$ is calculated using QDVR (details are provided in Appendix~\ref{app:simul}). Figure~\ref{Fig2} shows that when $k \geq r$, the result of QMINE converges to the correct value, whereas when $k < r$, it converges at a high error rate. This phenomenon has also been observed in other cases. These results support the QDVR's claim that the rank of the optimal solution $T$ is $r$. Because convergence is faster when $k=r$ than when $k>r$, it is best to use QMINE with $k=r$.

\begin{figure}
\centering
\includegraphics[width=1\linewidth]{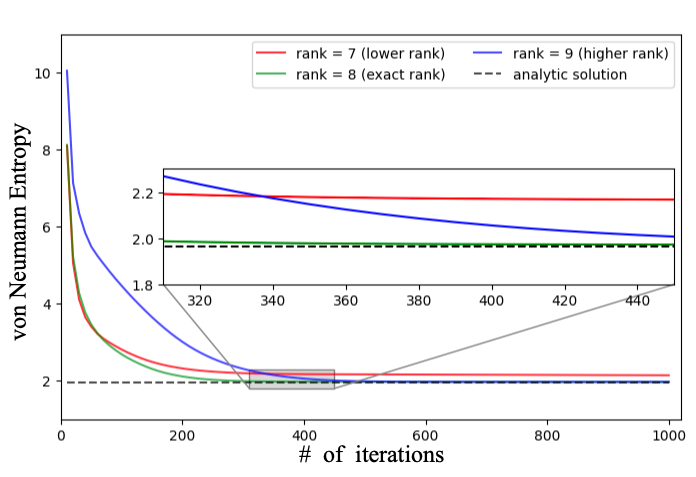}
\caption{We compare the performance of different approaches. The green curve represents QMINE with the exact rank, which exhibited the best performance. It converges rapidly with low error. However, the red curve represents QMINE with a lower rank, which converges with a high error. Finally, the blue curve represents QMINE with a higher rank, which converges with low error but at a slower pace.}
\label{Fig2}
\end{figure}

\subsection{Number of Trainable Parameters on Quantum Circuit Analysis}
We analyzed the performance of QMINE by varying the depth of the quantum circuit. In our simulations, we used $N=5$, $D=30$, $r=k=8$, and $c \leq 80$. The experimental results confirmed that as the depth of the circuit and the number of parameters increased, the estimation accuracy of QMINE improved. Fig.~\ref{Fig3} illustrates the results, showing that a circuit depth of 20 achieved the best performance. It converged rapidly with a lower error compared to a depth of 30, which converged at a slower rate despite having a similar error. These findings emphasize the importance of choosing an appropriate circuit depth (i.e., number of parameters) in QMINE. The copy complexity is determined by the number of parameters ($n_\text{params}$) and the number of training iterations ($n_\text{train}$). Therefore, when applying QMINE in various situations, it is crucial to select the correct circuit depth. We plan to investigate this aspect in future studies.

\begin{figure}
\centering
\includegraphics[width=1\linewidth]{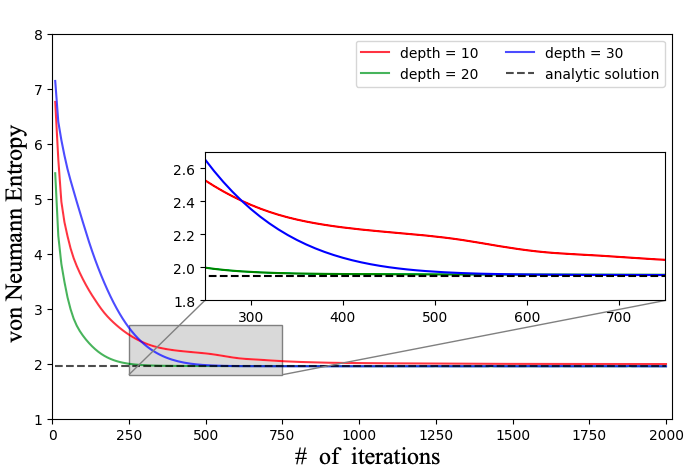}
\caption{The green line in the graph, representing a circuit depth of 20 with 400 parameters, exhibits the best performance. It converges quickly with a low error-rate. However, the red line, representing a depth of 10 with 200 parameters, converges with a high error-rate. The blue line, corresponding to a depth of 30 with 600 parameters, achieves a low error but it takes a longer-time to converge.}
\label{Fig3}
\end{figure}

\subsection{Estimating Quantum Mutual Information}
We estimated the quantum mutual information of a random density matrix using simulations with $N=4$ qubits. For each tested random density matrix, we achieved error rates ranging from $0.1\%$ to $1\%$. Additional details can be found in Appendix~\ref{app:simul}.

\begin{figure*}
\centering
\includegraphics[width=0.9\linewidth]{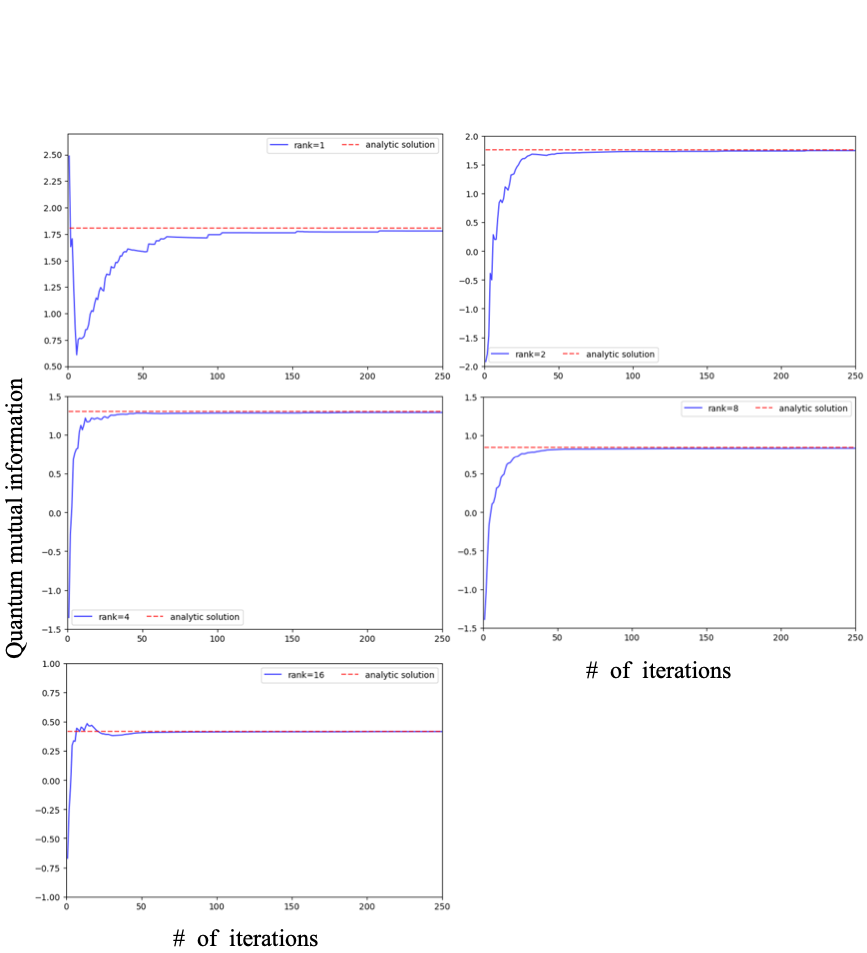}
\caption{Estimation result of the quantum mutual information of a random density matrix.}
\label{Fig4}
\end{figure*}

\section{Conclusions} \label{sec:conclusion}
We have addressed the quantum Donsker-Varadhan representation, which is a mathematical framework for estimating von Neumann entropy. The QDVR allows us to find the optimal $T$ by searching only within the density matrices, resulting in low copy complexity for calculations. By optimizing the quantum neural network using QDVR and the parameter shift rule, we can estimate von Neumann entropy and subsequently estimate the quantum mutual information. The number of copies of $\rho$ required is approximately $O\left(\frac{r^2}{\varepsilon^2}\left(n^2+\log^2\left(\frac{1}{\varepsilon}\right)\right)n_\text{params}\cdot n_\text{train}\right)$.

Through the numerical simulations, we demonstrated that the quantum mutual information neural estimation (QMINE) performs well, and it aligns with the results of quantum Donsker-Varadhan representation. The rank analysis supported the results of QDVR, whereas the circuit depth analysis emphasized the importance of selecting an appropriate circuit depth. In addition, we estimated the quantum mutual information and achieved a low error rate. The key finding of this study is the conversion of the quantum mutual information and von Neumann entropy estimation problem into a quantum neural network problem. In future, we suggest investigating the specifics of $n_\text{params}$ and $n_\text{train}$ pertaining to the quantum neural network problem. This will be explored in future studies.

\begin{acknowledgments}
This work was supported by the National Research Foundation of Korea (NRF) through grants funded by the Ministry of Science and ICT (NRF-2022M3H3A1098237) and the Ministry of Education (NRF-2021R1I1A1A01042199). This work was partially supported by an Institute for Information \& Communications Technology Promotion (IITP) grant funded by the Korean government (MSIP) (No. 2019-0-00003; Research and Development of Core Technologies for Programming, Running, Implementing, and Validating of Fault-Tolerant Quantum Computing Systems).
\end{acknowledgments}

\section*{Data availability}
Our manuscript has no associated data.

\section*{Conflict of Interest}
The authors have no conflicts to disclose.

\section{Appendix}
\subsection{Proof of Lemma~\ref{lem3}} \label{app:a}
Here, we provide an explicit proof of Lemma~\ref{lem3} and details of the numerical simulation results.

For given $\rho = \sum^r_{i=1} p_i \proj{\psi_i}{\psi_i}$, let us define $T_0 = \sum^r_{i=1} t_i\proj{\psi_i}{\psi_i}$ with $t_i = \begin{cases} \log(\frac{p_i}{k}), & \tn{if}\;\;p_i \geq k; \\ 0, & \tn{if}\;\;p_i < k \end{cases}$ and $k=\frac{\varepsilon}{d^2}$. Then, the bound on the value of $\left|S\left(\rho\right)-f\left(T_0\right)\right|$ can be derived as:

\begin{widetext}
\begin{align*}
\left|S\left(\rho\right) - f\left(T_0\right)\right| 
&= \left|S\left(\rho\right)+ \T\left(\rho T_0\right) - \log\left(\T(e^{T_0})\right)\right| \\
& = \left|\sum^r_{i=1} p_i\log\left(\frac{1}{p_i}\right) + \sum_{p_i \geq k} p_i\log\left(\frac{p_i}{k}\right) - \log\left(\sum_{p_i<k} 1 + \sum_{p_i \geq k} \frac{p_i}{k}\right)\right| \\
& = \left|\sum_{p_i < k} p_i\log\left(\frac{1}{p_i}\right) + \sum_{p_i \geq k} p_i\log\left(\frac{1}{k}\right)- \log\left(\sum_{p_i<k} 1 + \sum_{p_i \geq k} \frac{p_i}{k}\right)\right| \\
& = \left|\sum_{p_i < k} p_i\log\left(\frac{1}{p_i}\right) + \sum_{p_i \geq k} p_i\log\left(\frac{1}{k}\right)-\log\left(\frac{1}{k}\right)+\log\left(\frac{1}{k}\right)-\log\left(\sum_{p_i<k} 1 + \sum_{p_i \geq k} \frac{p_i}{k}\right)\right| \\
& = \left|\sum_{p_i < k} p_i\log\left(\frac{k}{p_i}\right) - \log\left(\sum_{p_i<k} k + \sum_{p_i \geq k} p_i\right)\right| \\
& = \left|\sum_{p_i < k} p_i\log\left(\frac{1}{p_i}\right) + \sum_{p_i < k} p_i\log\left(\frac{1}{k}\right) + \log\left(1 + \sum_{p_i<k} \left(k-p_i\right)\right)\right| \\
& \leq 2dk\log\left(\frac{1}{k}\right) + dk = \frac{2\varepsilon^2\left(2\log d + \log\left(\frac{1}{\varepsilon}\right)\right)}{d} + \frac{\varepsilon}{d} \\
&< \varepsilon.
\end{align*}
\end{widetext}

That is, $\left|S(\rho)-f(T_0)\right| < \varepsilon$. Finally, $\T\left(T_0\right)$ is estimated as
\begin{align*}
\T\left(T_0\right) 
& = \sum_{p_i \geq k} \log\left(\frac{p_i}{k}\right) \leq \sum_{p_i \geq k} \log\left(\frac{1}{k}\right) \\
& \leq r\log\left(\frac{1}{k}\right) = 2r\log d + r\log\left(\frac{1}{\varepsilon}\right) \\
&= 2rn+r\log\left(\frac{1}{\varepsilon}\right).
\end{align*}
This implies that there exists a positive Hermitian matrix $T_0$ such that $\T(T_0) = O(rn+r\log\left(\frac{1}{\varepsilon}\right))$ and $\left|S\left(\rho\right)-f\left(T_0\right)\right| < \varepsilon$. $\square$

\subsection{Details on Numerical Simulations} \label{app:simul}
To support our observations, we explain the details of the numerical simulations for estimating the quantum mutual information, which can be expressed as the sum of von Neumann entropies as follows:
\begin{align*}
I\left(A:B\right) &= S(\rho^A) + S(\rho^B) - S(\rho^{AB}) \\
&= S\left(\rho^A\otimes\rho^B\right) - S(\rho^{AB}).
\end{align*}
To obtain quantum mutual information, we adopted an alternative and simple strategy. By exploiting QMINE (suggested in Sec. ~\ref{sec:MAIN}), we directly estimate $S(\rho^A\otimes\rho^B)$ and $S(\rho^{AB})$. That is, we address $S(\rho^A\otimes\rho^B)$ rather than estimating $S(\rho^{A})$ or $S(\rho^{B})$. This method reduces the number of resource copies required for simulations.

We used  four-qubit for this simulation and the results of our experiment are summarized in Table~\ref{tab:QMI}. To show that QMINE can estimate the quantum mutual information for various density matrices, we present the results of the estimation, where the rank of $\rho_{AB}$ is different.
\bigskip

\begin{table}[b]
\centering
\tabcolsep=0.05in
\begin{tabular}{c c c c} 
\hline\hline
{Rank } & {QMI} & {Estimation results} & {Error-rate (\%)} \\
\hline
1 & 1.8048002 & 1.7946120 & $0.565$ \\
2 & 1.7631968 & 1.7493981 & $0.783$ \\
4 & 1.3031208 & 1.2902124 & $0.991$ \\
8 & 0.8440226 & 0.8376048 & $0.760$ \\
16 & 0.4172888 & 0.4163618 & $0.222$ \\
\hline\hline
\end{tabular}
\caption{\label{tab:QMI} \textbf{Estimations of quantum mutual information using the QMINE method}}
\end{table}

%

\end{document}